\newtheorem{proposition}{Proposition}
\DeclareMathOperator*{\argmaxA}{\arg\max}
\title{Optimal Crowdsourced Classification with a Reject Option in the Presence of Spammers}
\name{Qunwei Li, Pramod K. Varshney}%\thanks{Thanks to XYZ agency for funding.}}
\address{{ Department of EECS, Syracuse University, Syracuse, NY 13244 USA}\\
	{ \{qli33, varshney\}@syr.edu}}
\begin{document}
%\ninept
%
\maketitle
\begin{abstract}
We explore the design of an effective crowdsourcing system for an $M$-ary classification task. Crowd workers
complete simple binary microtasks whose results are aggregated to give the final decision. We consider
the scenario where the workers have a reject option so that they are allowed to skip microtasks when they are unable to or choose not to respond to binary microtasks. We present an aggregation approach using a weighted majority voting rule, where each worker's response is assigned an optimized weight to maximize crowd's classification performance.
\end{abstract}
\begin{keywords}
Classification, crowdsourcing, distributed inference, reject option, spammers
\end{keywords}
\section{Introduction}
\label{sec:intro}
{Crowdsourcing} provides a new framework to utilize distributed human wisdom to solve problems that machines cannot perform well, like handwriting recognition, paraphrase acquisition, audio transcription, and photo tagging \cite{paritosh2011computer,kamar2012combining,burrows2013paraphrase,7052378}. In spite of the successful applications of crowdsourcing, the relatively low quality of output remains a key challenge \cite{IpeirotisPW2010,allahbakhsh2013quality,mo2013cross}.

Several methods have been proposed to deal with the aforementioned problems \cite{KargerOS2011b,VempatyVV2014,yue2014weighted,6891807,VarshneyVV2014,QuinnB2011,zhang2012reputation,hirth2013analyzing}. A crowdsourcing task is decomposed into microtasks that are easy for an individual to accomplish, and these microtasks could be as simple as binary distinctions \cite{KargerOS2011b}. A classification problem with crowdsourcing, where taxonomy and dichotomous keys are used to design binary questions, is considered in \cite{VempatyVV2014}. New aggregation rules that mitigate the unreliability of the crowd and improve the crowdsourcing system performance are investigated in \cite{yue2014weighted,6891807}. In our research group, we employed binary questions and studied the use of error-control codes and decoding algorithms to design crowdsourcing systems for reliable classification \cite{VarshneyVV2014,VempatyVV2014}.  A group control mechanism where the reputation of the workers is taken into consideration to partition the crowd accordingly into groups is presented in\cite{QuinnB2011,zhang2012reputation}. Group control and majority voting techniques are compared in \cite{hirth2013analyzing}, which reports that majority voting is more cost-effective on less complex tasks. {A weighted voting framework is developed in \cite{Kuncheva2014}. However, prior information of the individual workers is assumed known, which is unrealistic in most practical situations.}

In past work on classification via crowdsourcing, crowd workers were required to provide a definitive yes/no response to binary microtasks. We consider the design of crowdsourcing systems where the workers are not forced to make a binary choice when they are unsure of their response and can choose not to respond. Crowd workers may be unable to answer questions for a variety of reasons such as lack of expertise. As an example, in mismatched speech transcription, i.e., transcription by crowd workers who do not know the language, workers may not be able to perceive the phonological dimensions they are tasked to differentiate \cite{jyothi2015acquiring}. We investigated the optimal aggregation rule where the workers have a reject option so that they are allowed to skip microtasks when they are unable to or choose not to respond \cite{7747496,Li:2017:CRC:3055601.3055607}.

In this paper, we extend our work \cite{7747496,Li:2017:CRC:3055601.3055607} by further taking the spammers' effect on the system into consideration. We study the scenario where spammers also exist in the crowd, who participate in the task only to earn some free money without regard to the quality of their answers. The spammers submit answers with random guesses. We propose an optimal aggregation rule to combat the spammers' effect on system performance, {which falls within the category of weighted majority voting methods, but where no prior individual information is needed.} Simulation results show significant performance improvement by the proposed method.

\section{Crowdsourcing with a Reject Option}
Consider the situation where $W$ workers take part in an $M$-ary object classification task. Each worker is asked $N$ simple binary questions, termed as microtasks, which eventually lead to a classification decision among the $M$ classes. We investigate independent microtask design and, therefore, we have $N = \left\lceil {{{\log }_2}M} \right\rceil $ independent microtasks of equal difficulty. The workers submit results that are combined to give the final decision. Here, we consider the microtasks as simple binary questions and the worker's answer to a single microtask is conventionally represented by either ``1'' (Yes) or ``0'' (No) \cite{VempatyVV2014,rocker2007paper}. Thus, the $w$th worker's ordered answers to all the microtasks form an $N$-bit word, which is denoted by ${\bf a}_w$. Let ${\bf a}_w(i)$, $i\in \{ 1,2,\dots,N\}$ represent the $i$th bit in this vector.  

In our previous work \cite{7747496,Li:2017:CRC:3055601.3055607}, we considered a more general problem setting where the worker has a reject option of skipping the microtasks. We denote this skipped answer as $\lambda$, whereas the ``1/0'' (Yes/No) answers are termed as definitive answers. Due to the variability of different workers' backgrounds, the probability of submitting definitive answers is different for different workers. Let $p_{w,i}$ represent the probability of the $w$th worker submitting $\lambda$ for the $i$th microtask. Similarly, let $\rho_{w,i}$ be the probability that ${\bf a}_w(i)$, the $i$th answer of the $w$th worker, is correct given that a definitive answer has been submitted. Due to the variabilities and anonymity of workers, we study crowdsourcing performance when $p_{w,i}$ and $\rho_{w,i}$ are realizations of certain probability distributions, which are denoted by ${F_P }\left( p \right)$ and ${F_\rho }\left( \rho \right)$ respectively. The corresponding means are expressed as $m$ and $\mu$.

Let $H_0$ and $H_1$ denote the hypotheses where ``0'' or ``1'' is the true answer for a single microtask, respectively. For simplicity of performance analysis, $H_0$ and $H_1$ are assumed equiprobable for every microtask. The crowdsourcing task manager or a fusion center (FC) collects the $N$-bit words from $W$ workers and performs fusion based on an aggregation rule. 
%Let $\rho^{s,t}_{w,i}, s,t\in \{0,1\}$ denote the probability that ${\bf a}_w(i)=t$ under hypothesis $H_s$ given that $w$th worker has a definitive answer for the corresponding task.  and we assume $\rho^{s,t}_{w,i}=\rho^{1-s,t}_{w,i}$. Therefore, the correct probability $\rho_{w,i} = {\rho 

In our previous work \cite{7747496,Li:2017:CRC:3055601.3055607}, we proposed a novel weighted majority voting method for crowdsourced classification, which was derived by solving the following optimization problem
\begin{equation}\label{problem}
	\begin{array}{l}
		\text{maximize}\ \ {E_C}\left[ {{\mathbb{W}}} \right]\\
		\text{subject to}\ \ {E_O}\left[ {{\mathbb{W}}} \right] = {K}
	\end{array}
\end{equation}
where ${E_C}\left[ {{\mathbb{W}}} \right]$ denotes the crowd's average weight contribution to the correct class and ${E_O}\left[ {{\mathbb{W}}} \right]$ denotes the average weight contribution to all the possible classes that is constrained to remain a constant $K$. For $i$th bit, every worker's answer is assigned the derived optimal weight, and a decision is then obtained. We showed that this method significantly outperforms the widely-used simple majority voting procedure.

In this paper, we investigate the impact of spammers on system performance. The weight assignment scheme is developed by solving problem \eqref{problem} as well. 
\section{Optimal Behavior for the Manager}
In typical crowdsourcing setups, workers are simply paid in proportion to the number of tasks they complete \cite{shah2015double}. Most likely, the spammers will complete all the microtasks with random guesses. A payment mechanism was proposed in the crowdsourcing system with a reject option to incentivize the crowd, where responses with even the slightest error are associated with minimum payment possible \cite{shah2015double}. This mechanism promotes skipping of all the microtasks by the spammers. Therefore, we assume that $M_{A}$ spammers complete all the microtasks and the rest of the $M_0$ spammers skip all the microtasks, making a total of $M$ spammers in the crowd of size $W$.
To combat the spammers' effect on the system performance, we develop the aggregation rule on the manager's side with a new weight assignment scheme to maximize the weight assigned to the correct class. 
\begin{proposition}
	To maximize the average weight assigned to the correct classification element, the weight for the $w$th worker's answer is given by
	\begin{align}
		{W_w} = \left[\left( {W - M} \right){\mu ^n} +% \frac{{{M_0}}}{{{m^N}}}\delta \left( n \right)
		\frac{{{M_A}}}{{{2^N}{{\left( {1 - m} \right)}^N}}}\delta \left( {n - N} \right)\right]^{-1},
	\end{align}
	where $n$ is the number of definitive answers that the $w$th worker submits, and $\delta(\cdot)$ is the Dirac delta function.
\end{proposition}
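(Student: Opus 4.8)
The plan is to reduce problem \eqref{problem} to an optimization over a single scalar weight function $W(n)$, the weight assigned to any worker who returns $n$ definitive answers, and then to evaluate the two functionals $E_C[\mathbb{W}]$ and $E_O[\mathbb{W}]$ by partitioning the crowd of size $W$ into three disjoint groups: the $W-M$ genuine workers, the $M_A$ active spammers who answer all $N$ microtasks by guessing, and the $M_0$ spammers who skip everything. First I would write each functional as a sum of the three groups' expected weight contributions, so that the Lagrangian $\mathcal{L}=E_C[\mathbb{W}]-\eta\bigl(E_O[\mathbb{W}]-K\bigr)$ becomes an explicit function of $W(n)$ and the multiplier $\eta$.

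The core of the argument is the per-group bookkeeping. For a genuine worker the probability that all $n$ submitted bits agree with the true codeword is $\mu^n$, so this group enters the correct-class contribution through the factor $(W-M)\mu^n$, exactly the first term of the claimed denominator. An active spammer always lands at $n=N$, and a uniform random $N$-bit guess coincides with the correct codeword with probability $2^{-N}$; because this mass is concentrated at a single value of $n$, it must be encoded with a Dirac delta $\delta(n-N)$, which is the origin of the second term. The skip spammers sit at $n=0$ and contribute only a harmless constant that drops out of the stationarity condition. Setting $\partial\mathcal{L}/\partial W(n)=0$ then expresses $W(n)$ as the reciprocal of the accumulated correct-class weight density at that value of $n$; the common per-$n$ combinatorial factor $\binom{N}{n}(1-m)^n m^{N-n}$ shared by numerator and denominator cancels, leaving the stated closed form.

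I expect the main obstacle to be the correct normalization of the spammer term, i.e.\ justifying the factor $M_A/\bigl(2^N(1-m)^N\bigr)$ multiplying $\delta(n-N)$. The difficulty is that the $M_A$ active spammers deterministically occupy $n=N$, whereas genuine workers reach $n=N$ only with probability $(1-m)^N$; to place both populations on the same footing inside the stationarity equation one must divide the spammers' expected correct-class mass $M_A\,2^{-N}$ by this genuine-worker density, which produces precisely the $(1-m)^N$ in the denominator. A secondary point is to confirm that the stationary point maximizes $E_C[\mathbb{W}]$ subject to $E_O[\mathbb{W}]=K$ rather than giving a saddle, which I would settle by checking convexity of the constraint set together with the sign of the second variation, and to verify that the resulting weights stay nonnegative so the aggregation rule is well posed.
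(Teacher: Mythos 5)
Your decomposition of the crowd into the three groups, and your key normalization insight---that the active spammers' correct-class mass $M_A 2^{-N}$ must be divided by the genuine-worker density $(1-m)^N$ at $n=N$ to put both populations on a common footing, which is exactly where the factor $\frac{M_A}{2^N(1-m)^N}\,\delta(n-N)$ comes from---both match the paper's computation of $E_C[\mathbb{W}]=\sum_n W_w S(n)\mathbb{P}(n)$ with $S(n)=(W-M)\mu^n+\frac{M_0}{m^N}\delta(n)+\frac{M_A}{2^N(1-m)^N}\delta(n-N)$. But your optimization step has a genuine gap. You set up both $E_C[\mathbb{W}]$ and $E_O[\mathbb{W}]$ as ``sums of the three groups' expected weight contributions,'' i.e.\ as functionals that are \emph{linear} in $W(n)$, and then assert that $\partial\mathcal{L}/\partial W(n)=0$ yields $W(n)$ as the reciprocal of the correct-class weight density. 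With a linear objective and a single linear equality constraint, the Lagrangian is itself linear in each $W(n)$, so the stationarity conditions read $S(n)\mathbb{P}(n)=\eta\,T(n)$, where $T(n)$ is the coefficient of $W(n)$ in $E_O$; these conditions involve no $W(n)$ at all and generically cannot hold for every $n$ simultaneously. The true optimum of that linear program is a degenerate vertex solution concentrating all weight at a single value of $n$, not the claimed reciprocal rule. Your stationarity claim would be correct only if $E_O[\mathbb{W}]$ were the specific \emph{quadratic} functional $\sum_n \left(W(n)S(n)\right)^2\mathbb{P}(n)$, which you never state or justify.

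The paper sidesteps the Lagrangian entirely: it bounds $E_C[\mathbb{W}]=\sum_n W_w S(n)\mathbb{P}(n)$ via Cauchy--Schwarz, $E_C[\mathbb{W}]\le\sqrt{\sum_n \left(W_w S(n)\right)^2\mathbb{P}(n)}\cdot\sqrt{\sum_n\mathbb{P}(n)}$, and reads off the equality condition $W_w S(n)=\alpha$ (a constant), which is what forces $W_w\propto 1/S(n)$ and simultaneously certifies global optimality, so no second-variation or convexity check is needed. To repair your argument you would have to make the constrained quantity explicit and quadratic in this precise way, at which point your stationarity computation becomes equivalent to the Cauchy--Schwarz equality case. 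One smaller point: the $M_0$ skip-spammers do not contribute ``a harmless constant'' to $E_C$---their contribution is $M_0 W(0)$, linear in $W(0)$, and it does enter the optimality condition at $n=0$, producing the term $\frac{M_0}{m^N}\delta(n)$ in the denominator of the unreduced weight. The paper removes that term only afterwards, by observing that a worker with $n=0$ submits nothing to the fusion center, so the value of $W(0)$ is immaterial and may be set to zero without loss of generality; that, not cancellation in the stationarity equation, is the correct justification for its absence from the final formula.
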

\begin{proof}\renewcommand{\qedsymbol}{}
When there are $M$ spammers in the crowd with $M_0$ skipping and $M_A$ completing all the questions, the expected weight contributed to the correct class is given by
\begin{align}\label{21}
E_C[\mathbb{W}]=&\sum\limits_{w=1}^{W-M}E_{p,\rho}\left[\sum\limits_{n=0}^{N}W_w\rho(n)P_{\lambda}(n)\right]+\sum\limits_{w=1}^{M_0}W_w(n=0)\nonumber\\
&+\sum\limits_{w=1}^{M_A}\frac 1 {2^N}W_w(n=N)\nonumber\\
=&\sum\limits_{n=0}^N(W-M)W_w\mu^n\binom{N}{n}(1-m)^nm^{N-n}\nonumber\\
&+\sum\limits_{n=0}^NM_0W_w\delta(n)+\sum\limits_{n=0}^N\frac{M_A}{2^N}W_w\delta(n-N)\nonumber\\
=&\sum\limits_{n=0}^N(W-M)W_w\mu^n\mathbb P(n)+\sum\limits_{n=0}^N\frac{M_0}{\mathbb P(0)}W_w\mathbb P(n)\delta(n)\nonumber\\
&+\sum\limits_{n=0}^N\frac{M_A}{2^N \mathbb P(N)}W_w\mathbb P(n)\delta(n-N)\nonumber\\
=&\sum\limits_{n=0}^NW_wS(n)\mathbb P(n),
\end{align}
where $\mathbb P(n)=\binom{N}{n}(1-m)^nm^{N-n}$,
and 
\begin{align}
S(n)=(W-M)\mu^n+\frac{M_0}{m^N}\delta(n)+\frac{M_A}{2^N(1-m)^N}\delta(n-N).\nonumber
\end{align}

Note that ${\sum\limits_{n = 0}^N {\mathbb P(n)}  }=1$, and then \eqref{21} is upper-bounded using Cauchy-Schwarz inequality as follows:
\begin{align}
&E_C[\mathbb W]=\sum\limits_{n=0}^NW_wS(n)\mathbb P(n)\nonumber\\
\label{24}&\le \sqrt{\sum\limits_{n=0}^N(W_wS(n))^2\mathbb P(n)}\sqrt{\sum\limits_{n=0}^N\mathbb P(n)}=\alpha.
\end{align}
Also note that equality holds in \eqref{24} only if 
\begin{align}
W_wS(n)\sqrt{\mathbb P(n)}=\alpha\sqrt{\mathbb P(n)}\nonumber
\end{align}
where $\alpha$ is a positive constant, and $W_wS(n)=\alpha$

Therefore, the optimal behavior for the manager in terms of the weight assignment is obtained as
\begin{align}
{W_w} \!=\!\! \left[\!\left( {W \!-\! M} \right){\mu ^n}\! + \!\frac{{{M_0}}}{{{m^N}}}\delta \left( n \right)\! +\! \frac{{{M_A}}}{{{2^N}{{\left( {1 \!- \!m} \right)}^N}}}\delta \left( {n \!-\! N} \right)\!\right]^{-1}\!\!\!.\nonumber
\end{align}

Note that if a worker submits no definitive answers, i.e. $n=0$, the corresponding weight assigned is $(W-M+\frac{M_0}{m^N})^{-1}$. However, since this worker skips all the questions, his/her decision for a certain question is not taken into consideration at the fusion center and, without loss of generality, the corresponding weight can be set equal to zero. Therefore, the weight assignment for the scheme can be expressed as 
\begin{align}
{W_w} = \left[\left( {W - M} \right){\mu ^n} +% \frac{{{M_0}}}{{{m^N}}}\delta \left( n \right)
\frac{{{M_A}}}{{{2^N}{{\left( {1 - m} \right)}^N}}}\delta \left( {n - N} \right)\right]^{-1}.\nonumber
\end{align}
\end{proof}

{Compared to the weight assignment for an honest crowd \cite{7747496}, the derived scheme differs in terms of the weight assigned to the workers who complete all the microtasks. If the spammers skip all the microtasks, the weight assignment scheme remains the same, which is intuitively true as no random guesses are received by the manager from the spammers and the crowd can be considered as honest as well. Otherwise, the weight assignment scheme differs from the scheme given in \cite{7747496}.}

\subsection{Parameter Estimation}
In order to act optimally, the manager has to estimate several parameters before the weight assignment can be adopted. Specifically, one has to estimate $\mu, m,  M_A, M_0$ before he/she can proceed with the optimal weight assignment. We can employ either the \textit{Training-based} or \textit{Majority-voting based} method to estimate $\mu$ as stated in our previous work \cite{7747496}. Calculating the ratio of the sum of skipped questions over all the questions attempted by the crowd gives the estimated $m$. Based on the analysis in previous sections, the answers with all questions completed or skipped should be discarded for estimation.

We hereby jointly address the estimation of $M_0$ and $M_A$ by using the maximum likelihood estimation (MLE) method. First, as we employ $G$ gold standard questions, a worker has to respond to $N+G$ microtasks. Let $W_{N+G}$ denote the number of workers submitting $N+G$ definitive answers, and $W_0$ denote the number of workers skipping all the microtasks. Given the numbers of spammers respectively completing and skipping all the microtasks, $M_A$ and $M_0$, the joint probability distribution function of $W_{N+G}$ and $W_0$, $f(W_{N+G},W_0|M_A,M_0      )$,  is expressed in \eqref{mle}, where $\hat m$ is the estimated $m$. 

Therefore, by the MLE method, the estimates of $M_A$ and $M_0$, which are denoted by $\hat M_A$ and $\hat M_0$ respectively, can be obtained as
\begin{align}
	\left\{ {\hat  M_A,\hat M_0 } \right\}=\argmaxA_{\left\{{  M_A, M_0 }  \right\} \ge 0 }f(W_{N+G},W_0|M_A,M_0      ) .
\end{align}
\begin{figure*}[!ht]
	\normalsize
	\begin{align}\label{mle}
		f(W_{N+G},W_0|M_A,M_0      )=&\binom{W_0-M_0}{W-M_0-M_A}(\hat m^{N+G})^{W_0-M_0}(1-\hat m^{N+G})^{W-W_0-M_A}\nonumber\\
		&\cdot \binom{W_{N+G}-M_A}{W-W_0-M_A}(1-\hat m)^{(N+G)(W_{N+G}-M_A)}\left(1-(1-\hat m)^{N+G}\right)^{W-W_{N+G}-W_0}
	\end{align}
	\hrulefill \vspace*{4pt}
\end{figure*}

Once the manager has the estimation results $\hat \mu$, $\hat{m}$, ${\hat M_A}$, and ${\hat M_0}$, he/she can optimally assign the weight to the workers' answers for aggregation.

\subsection{Performance Analysis}
In this section, we characterize the performance of such a crowdsourcing classification framework, where the task manager behaves optimally, in terms of the probability of correct classification $P_c$. Note that we have an overall correct classification only when all the bits are classified correctly.

\begin{proposition}
	The probability of correct classification $P_c$ in the crowdsourcing system is 
	\begin{align}
		{P_c} =\Big[ \frac{1}{2} &+ \frac{1}{2}\sum\limits_{S} {\binom{W}{\mathbb{Q}}} \left( {F(\mathbb{Q}) - F^{\prime}(\mathbb{Q})} \right) \nonumber\\&+\frac{1}{4}\sum\limits_{S^\prime} {\binom{W}{\mathbb{Q}}} \left( {F(\mathbb{Q}) - F^{\prime}(\mathbb{Q})} \right)\Big] ^N
	\end{align}
	with
	\begin{align}
	F({{\mathbb Q}})\!=\! m^{q_0}\!\prod\limits_{n = 1}^N \!\!{{{\left( {1 \!-\! \mu } \right)}^{{q_{ - n}}}}{\mu ^{{q_n}}}{{\left( {C_{N - 1}^{n - 1}{{\left( {1 \!-\! m} \right)}^n}{m^{N - n}}} \right)}^{{q_{ - n}} + {q_n}}}} \nonumber
	\end{align}
	and
	\begin{align}
		F^{\prime}({\mathbb Q})\! =\! m^{q_0}\!\prod\limits_{n = 1}^N\! {{{\left( {1 \!-\! \mu } \right)}^{{q_n}}}{\mu ^{{q_{ - n}}}}{{\left( {C_{N - 1}^{n - 1}{{\left( {1 \!-\! m} \right)}^n}{m^{N - n}}} \right)}^{{q_{ - n}} + {q_n}}}} \nonumber
	\end{align}
	where
	\begin{align}
		{{\mathbb Q}} = \{ &({q_{ - N}},{q_{ - N + 1}}, \ldots {q_N},M_A^{\prime},M_A^{\prime \prime}):\nonumber\\
		&\sum\limits_{n =  - N}^N {{q_n} = W - M_A -M_0} ,M_A^{\prime}+M_A^{\prime \prime}=M_A \},\nonumber
	\end{align} with natural numbers $q_n$, $M_A^{\prime}$, and $M_A^{\prime \prime}$, 
	\begin{align}
		{S} \!=\! \left\{\! {{{\mathbb Q}}\!:\!\sum\limits_{n = 1}^N {\frac{{{q_n} - {q_{ - n}}}}{(W\!-\!M)\mu^n}} \! +\!(M_A^{\prime}\!-\!M_A^{\prime \prime})\frac{2^N(1\!-\!m)^N}{M_A} }\!>\!0 \right\},\nonumber
	\end{align}
	\begin{align}
		{S}^\prime  = \left\{\! {{{\mathbb Q}}\!:\!\sum\limits_{n = 1}^N {\frac{{{q_n} \!-\! {q_{ - n}}}}{(W\!-\!M)\mu^n}}  \!+\!(M_A^{\prime}\!-\!M_A^{\prime \prime})\frac{2^N(1\!-\!m)^N}{M_A} }\!=\!0 \right\},\nonumber
	\end{align} and $\binom{W}{\mathbb{Q}} = \frac{{W!}}{{\prod_{n =  - N}^N {{q_n}!} }}$.
\end{proposition}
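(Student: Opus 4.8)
The plan is to reduce the $M$-ary problem to a single microtask and exploit the independent microtask design. Since a correct overall classification requires every one of the $N$ bits to be decided correctly, and the microtasks are independent and of equal difficulty, I would first argue that $P_c = P_b^{\,N}$, where $P_b$ is the probability of correctly resolving one microtask. The task then reduces to showing that the bracketed quantity equals $P_b$.

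To compute $P_b$, I would enumerate all vote configurations on a single bit through the multinomial index $\mathbb{Q}$. The $W-M$ honest workers are sorted by the total number $n$ of definitive answers they submit and by their vote on this bit: those who answer this bit and vote ``1'' are counted in $q_n$, those who vote ``0'' in $q_{-n}$, and those who skip this bit in $q_0$; the $M_A$ active spammers split into $M_A'$ voting ``1'' and $M_A''$ voting ``0'', while the $M_0$ all-skippers carry zero weight and are discarded. The number of labelings realizing a given $\mathbb{Q}$ is the multinomial coefficient $\binom{W}{\mathbb{Q}}$. For the configuration probability I would use that, for an honest worker, the joint event ``answers this bit correctly and submits $n$ definitive answers in total'' has probability $\mu\, C_{N-1}^{n-1}(1-m)^n m^{N-n}$, since $C_{N-1}^{n-1}(1-m)^n m^{N-n}$ is the marginal probability of answering this bit together with $n-1$ of the remaining $N-1$ bits, and $\mu$ (resp.\ $1-\mu$) is the chance the definitive answer is correct (resp.\ wrong). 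Multiplying these per-worker factors over the multinomial classes and appending $m^{q_0}$ for the skippers yields $F(\mathbb{Q})$ under $H_1$; the expression under $H_0$, namely $F'(\mathbb{Q})$, is identical except that $\mu$ and $1-\mu$ swap roles (equivalently $q_n\leftrightarrow q_{-n}$ in the $\mu$-exponents). The spammers contribute a hypothesis-independent uniform factor, which is why only their split $(M_A',M_A'')$ enters $F$ and $F'$.

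Next I would write down the fusion rule and turn it into the region definitions. The FC decides ``1'' exactly when the weighted vote favors ``1'', i.e.\ $\sum_{n\ge 1}(q_n-q_{-n})\,w(n) + (M_A'-M_A'')\,w(N) > 0$, with $w(n)$ the optimal weight of Proposition~1. Substituting $w(n) = [(W-M)\mu^n]^{-1}$ for $n<N$ and the all-answering weight $w(N)=[(W-M)\mu^N + M_A 2^{-N}(1-m)^{-N}]^{-1}$, and normalizing the inequality, produces the linear threshold defining $S$ (strict inequality, decide ``1'') and its boundary $S'$ (equality, a tie). The bookkeeping of $w(N)$, which is shared by honest workers who happen to answer all $N$ microtasks and by the active spammers, is the step to watch, since it is what separates the honest contribution $\tfrac{q_N-q_{-N}}{(W-M)\mu^N}$ from the spammer contribution $(M_A'-M_A'')\tfrac{2^N(1-m)^N}{M_A}$ in the stated form of $S$ and $S'$.

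Finally I would assemble $P_b$ by averaging over the equiprobable hypotheses and breaking ties fairly. Writing $P(\mathrm{decide}\ 1\mid H_1)=\sum_{S}\binom{W}{\mathbb{Q}}F + \tfrac12\sum_{S'}\binom{W}{\mathbb{Q}}F$ and $P(\mathrm{decide}\ 0\mid H_0)=1-\sum_{S}\binom{W}{\mathbb{Q}}F' - \tfrac12\sum_{S'}\binom{W}{\mathbb{Q}}F'$ (the $\tfrac12$ on $S'$ being the coin flip on a tie), and forming $P_b=\tfrac12 P(\mathrm{decide}\ 1\mid H_1)+\tfrac12 P(\mathrm{decide}\ 0\mid H_0)$, the constant $\tfrac12$, the factor $\tfrac12$ on $S$, and the factor $\tfrac14$ on $S'$ all fall out, together with the differences $F-F'$; raising $P_b$ to the $N$th power gives the claim. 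I expect the main obstacle to be exactly the reduction of the weighted-vote inequality, carrying the coupled all-answerer weight $w(N)$ correctly, to the clean threshold defining $S$ and $S'$; the hypothesis-symmetry assembly that yields the $\tfrac12,\tfrac12,\tfrac14$ coefficients is routine once $F$ and $F'$ are in hand.
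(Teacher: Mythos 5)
You should know at the outset that the paper never actually proves this proposition: its ``proof'' is a placeholder deferring to an extended version and to the similar argument in the earlier paper \cite{7747496}, so there is no paper proof to compare against and your proposal must stand on its own. Your overall architecture --- reduce to a single bit via $P_c = P_b^{\,N}$, enumerate vote configurations $\mathbb{Q}$ with multinomial counting, identify $F$ and $F'$ as the configuration probabilities under $H_1$ and $H_0$, and use the equiprobable-hypothesis symmetry plus fair tie-breaking to produce the $\tfrac12,\tfrac12,\tfrac14$ coefficients --- is surely the intended one.

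However, the step you yourself flag as ``the main obstacle'' is a genuine gap, and as asserted it appears to be false. Under Proposition 1 the fusion center assigns weights as a function of $n$ alone; it cannot distinguish an honest worker who answers all $N$ microtasks from an active spammer, so both must receive the identical weight $w(N) = \left[(W-M)\mu^N + M_A 2^{-N}(1-m)^{-N}\right]^{-1}$. The decision statistic is therefore $\sum_{n=1}^{N-1}\frac{q_n-q_{-n}}{(W-M)\mu^n} + \left(q_N - q_{-N} + M_A^{\prime} - M_A^{\prime\prime}\right)w(N)$, in which $q_N - q_{-N}$ and $M_A^{\prime} - M_A^{\prime\prime}$ necessarily carry the \emph{same} coefficient. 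No normalization turns this into the threshold defining $S$, where $q_N - q_{-N}$ is weighted by $[(W-M)\mu^N]^{-1}$ while $M_A^{\prime}-M_A^{\prime\prime}$ is weighted by $2^N(1-m)^N/M_A$: for instance $q_N - q_{-N} = 1$, $M_A^{\prime}-M_A^{\prime\prime} = -1$, all other differences zero, is an exact tie under the true statistic but generically lies in neither $S^{\prime}$ nor on the boundary of the stated threshold. So you must either supply a separate argument for how the stated $S, S^{\prime}$ arise from Proposition 1's weights, or accept that a rigorous derivation yields decision regions different from those claimed. Two further holes: (i) your remark that the spammers contribute ``a hypothesis-independent uniform factor'' does not justify omitting $\binom{M_A}{M_A^{\prime}}2^{-M_A}$ from the configuration probability --- hypothesis-independence lets it factor out of $F-F^{\prime}$ but not vanish, and without it $\sum_{\mathbb{Q}}\binom{W}{\mathbb{Q}}F(\mathbb{Q})$ is not even normalized over the spammer splits, so your expressions for $P(\text{decide } 1 \mid H_1)$ are not probabilities; (ii) $P_c = P_b^{\,N}$ is not exact in this model, because the weights couple the bits through each worker's total count $n$ of definitive answers --- the per-bit decisions are conditionally independent given those counts but not marginally --- so a careful proof must condition on the counts or explicitly note the product form as an approximation (a defect the proposition's statement itself inherits).
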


\begin{proof}\renewcommand{\qedsymbol}{}
Due to the space limit, we only give the result here. The proof will be given in the extended version of the paper and a similar proof can be found in our previous paper \cite{7747496}.
\end{proof}
\subsection{Simulation Results}
In this section, we present the simulation results to illustrate the performance of the proposed schemes. $W=50$ workers participate in a crowdsourcing task with $N=3$ microtasks and $G=3$ gold standard questions. $F_P(p)$ is chosen as a uniform distribution $U(0,1)$, and let $F_\rho(\rho)$ be a uniform distribution expressed as $U(x,1)$ with $0\le x \le 1$, and thus we can have $\mu$ varying from 0.5 to 1.

\begin{figure}[h]
	\centering
	\includegraphics[width=2.8in]{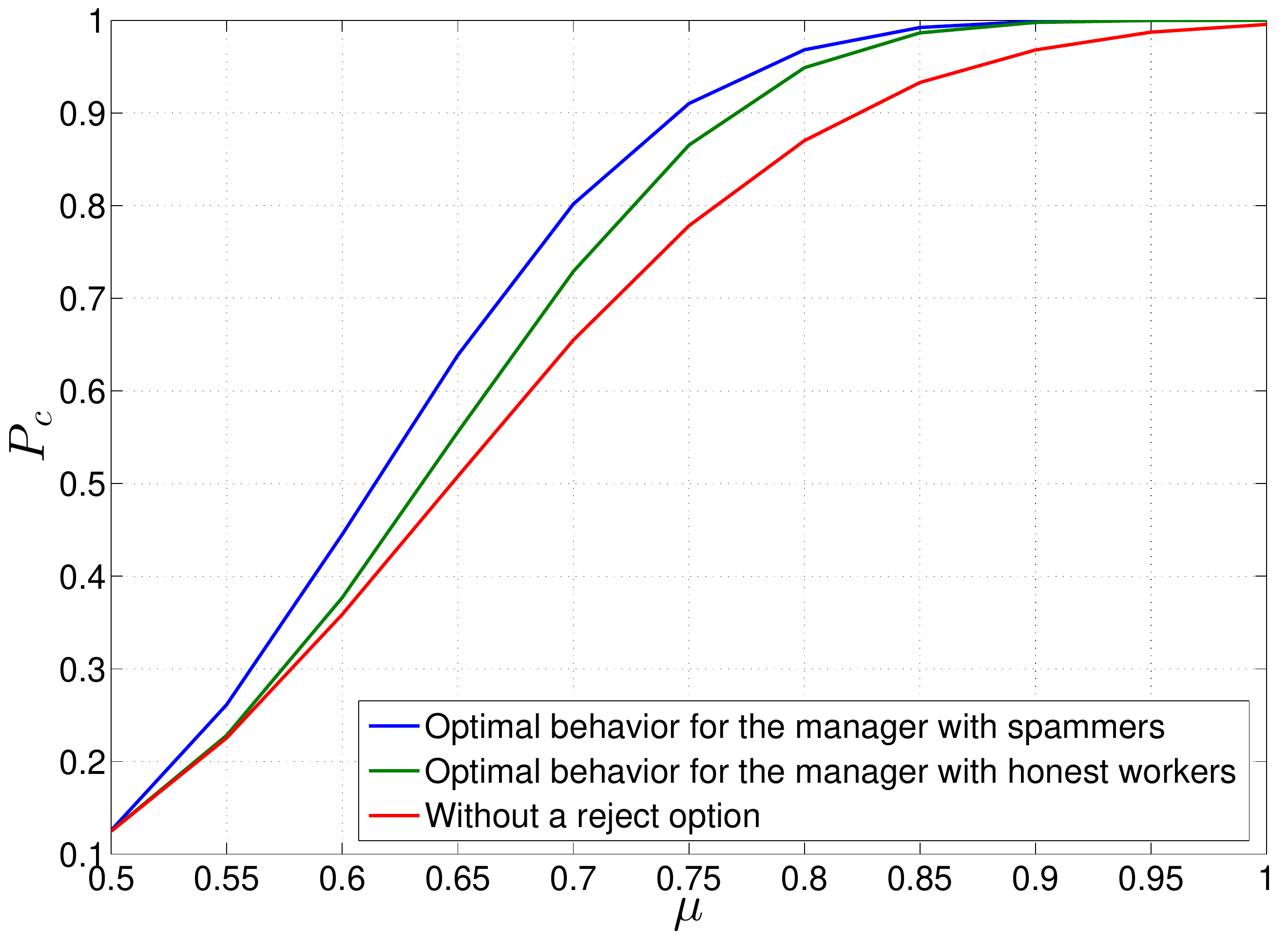} % requires the graphicx package
	\caption{Performance comparison with spammers.}
	\label{Pcwithvariousmu_estimationofM0andMN}
\end{figure}

We present the performance comparison with spammers in Fig. \ref{Pcwithvariousmu_estimationofM0andMN}, where the quality of the crowd $\mu$ varies. We plot the performance of three different weight assignment methods. The first one is what we derived in this section, which is referred to as the optimal behavior for the manager with spammers. The second is the one that we derived in \cite{7747496}, which is given by $W_w=\mu^{-n}$. {Since we do not assume the knowledge of prior information regarding individuals, the existing weighted majority voting methods fail to work in this setting. Thus, we choose the conventional simple majority voting without a reject option for comparison.} For illustration, there are 14 spammers in a crowd of 50 workers, and we have 7 spammers completing all the questions and the other 7 skipping all the questions. When $\mu=0.5$, the workers are making random guesses even if they believe that they are able to respond with definitive answers. In such a case, the choice of weight assignment schemes does not make a difference, and therefore, the three curves merge at this point. The method with optimal behavior for the manager with spammers outperforms the other two, while the simple majority voting without a reject option performs the worst.

\begin{figure}[h]
	\centering
	\includegraphics[width=2.8in]{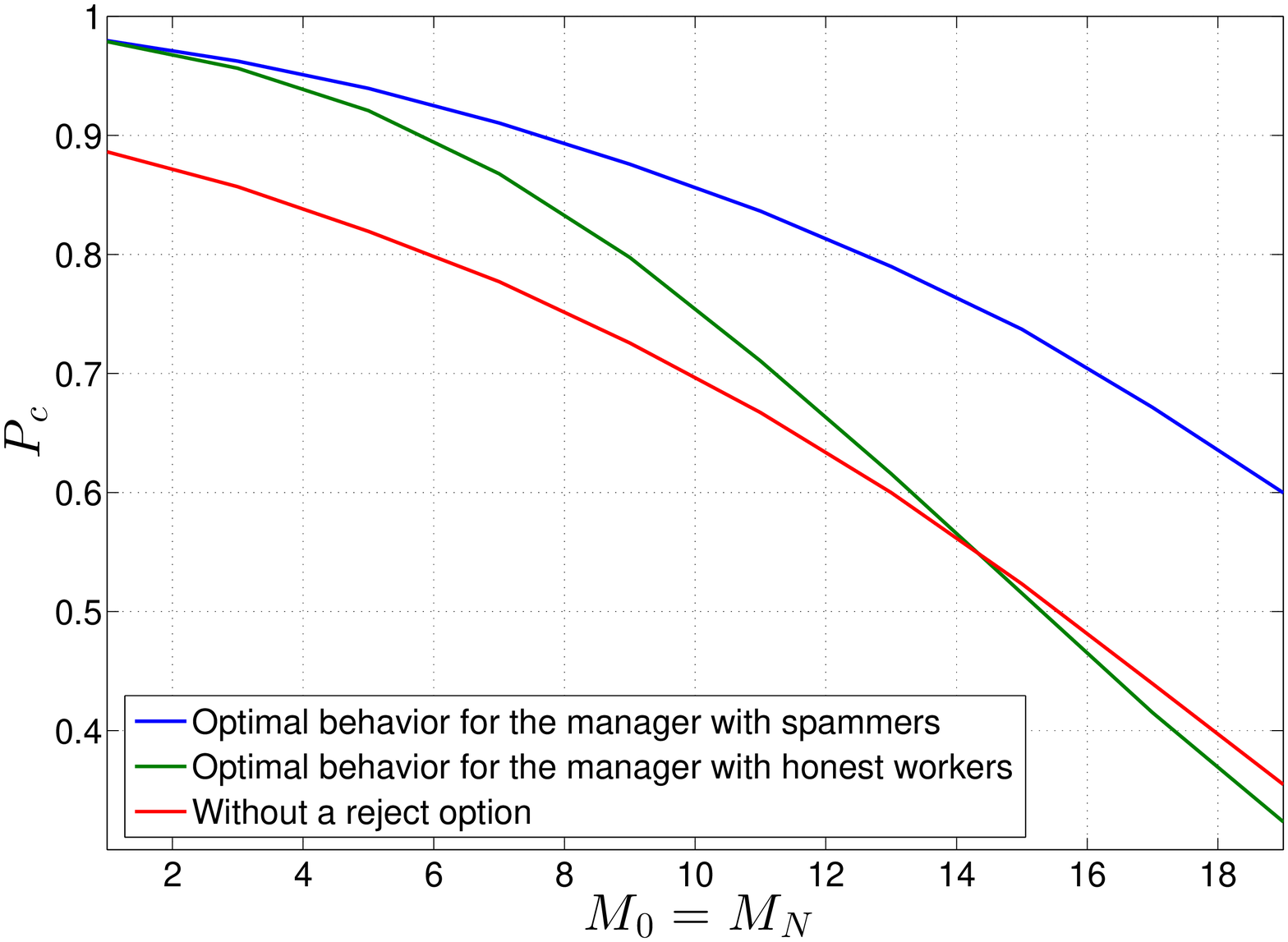} % requires the graphicx package
	\caption{Performance comparison with various spammers.}
	\label{Pcvariousspammers}
\end{figure}

In Fig. \ref{Pcvariousspammers}, we plot the performance comparison when the number of spammers changes. We set that $M_0=M_A$, and $\mu$ is fixed at 0.75. As we can observe, the method with optimal behavior for the manager with spammers yields the best performance. When the number of spammers is small, the simple majority voting method is outperformed by the one with optimal behavior for the manager with honest workers. However, this is not the case when the number of spammers is large. The reason is that with honest workers, the manager assigns a greater weight to the worker with a larger number of definitive answers. In the regime where $M_A$ is large, which means the number of spammers completing all the questions is large, the impact from the spammers is much more severe on the performance with such a weight assignment scheme. Thus, the corresponding performance degrades significantly.

\section{Conclusion}
We have studied a novel framework of crowdsourcing system for classification, where an individual worker has the reject option and can skip a microtask if he/she has no definitive answer. We investigated the impact of the spammers in the crowd on the crowdsourcing system performance. We derived the optimal strategy for the manager, where an optimal weighted aggregation rule for the crowdsourcing was proposed to combat the spammers' influence.

	\bibliographystyle{IEEEtran}
	\bibliography{IEEEabrv,ref_Lqw}

\end{document}